\begin{document}


\title{\LARGE \bf Contracting  Nonlinear Observers:  Convex Optimization and\\ Learning from Data}



\renewcommand{\r}{\rho}
\renewcommand{\D}{\Delta}
\newcommand{\z}{\hat x}

\IEEEoverridecommandlockouts
\author{
Ian R. Manchester$^{1}$ 
\thanks{*This work was supported by the Australian Research Council}
\thanks{$^{1}$ I. Manchester is with the
Australian Centre for Field Robotics, School of Aerospace, Mechanical and Mechatronic Engineering,, University of Sydney, Australia. 
{\tt\small ian.manchester@sydney.edu.au}}}
\maketitle

%
%
%
%
%
%
%
\begin{abstract}
A new approach to design of nonlinear observers (state estimators) is proposed. The main idea is to (i) construct a convex set of dynamical systems which are contracting observers for a particular system, and (ii) optimize over this set for one which minimizes a bound on state-estimation error on a simulated noisy data set. We construct convex sets of continuous-time and discrete-time observers, as well as contracting sampled-data observers for continuous-time systems. Convex bounds for learning are constructed using Lagrangian relaxation. The utility of the proposed methods are verified using numerical simulation.
\end{abstract}
%
%
%

\section{Introduction}
The problem of estimating the ``internal state'' (a.k.a. hidden or latent variables) of a dynamical system is one of the canonical problems in control engineering, and similar problems are encountered in time-series prediction, machine learning, signal processing, and may other fields. For linear systems a comprehensive methodology is available that combines computational simplicity, stability, and certain types of statistical or deterministic optimality (e.g. \cite{anderson2012optimal, petersen1999robust}). 

For non-linear systems, on the other hand, compromises are necessary. In all but a few isolated cases there is no finite-dimensional representation of the statistically-optimal estimator, so one must sacrifice one or more of optimality, global stability, or computational simplicity. 

Extended Kalman filters retain the simple structure of the linear case, but are suboptimal and stability is generally difficult to establish \cite{anderson2012optimal}. More accurate but computationally intensive approaches include particle filters a.k.a. sequential Monte Carlo methods \cite{doucet2001sequential}, Moving-horizon estimation \cite{rao2003constrained}, and set-valued state estimators \cite{shamma_approximate_1997}. Notably, particle filters and set-valued estimators maintain high-dimensional representations of a distribution (or set) of possible states, rather than a simple point estimate, while moving-horizon estimation maintains a long history of recorded inputs and outputs. Therefore even for low order systems, the internal state of the estimator is of  high dimension.

The term ``nonlinear observer'' usually refers to a state estimator that abandons any claim to statistical optimality, but  is a dynamical system of low dimension (equal or similar dimension to the true system), which is simple to implement, and for which global or semi-global stability can be established analytically or computationally \cite{praly2015observers}. 



Over the years, many methods of observer design have been developed, including those based on geometric methods \cite{krener_linearization_1983}, high-gain designs \cite{khalil_high-gain_2014}, immersion and invariance \cite{karagiannis_invariant_2008}, and finding a nonlinear transformation (possibly with excessive coordinates) to a stable linear observer \cite{andrieu_existence_2006}.

Computational approaches based on convex optimization (in particular, linear matrix inequalities and semidefinite programming) have been used to construct observers for  systems with nonlinearities characterized by monotonicity  \cite{arcak_nonlinear_2001, fan_observer_2003} and $L^2$ gain (e.g. \cite{coutinho_robust_2009} and references therein). Most approaches are based on the search for a quadratic Lyapunov function, while in \cite{ebenbauer_polynomial_2005} sum-of-squares programming was used to find observers for polynomial systems via certain non-quadratic Lyapunov functions. Sum-of-squares relaxes the search for sign-definite polynomials to a semidefinite program \cite{parrilo2003semidefinite}, and will also be applied in the present paper.

Several papers have addressed observer design using contraction analysis, e.g. \cite{Lohmiller98, sanfelice_convergence_2012, dani_observer_2015}. Indeed, observer design was one of the original motivations for contraction analysis as introduced in  \cite{Lohmiller98}.
Contraction analysis is based on the study of a nonlinear system by way of its differential dynamics (a.k.a. variational system) along solutions.  Roughly speaking, since the differential dynamics are linear time-varying (LTV), many techniques from linear systems theory can be directly applied. A central result is that if all solutions of a smooth nonlinear system are locally exponentially stable in a common metric, then all solutions are globally exponentially stable. 
Historically, basic convergence results on contracting systems can be traced back to the results of \cite{lewis1949metric} in terms of Finsler metrics, further explored in \cite{forni2014differential}, while convex conditions for existence and robustness of limit cycles were given in \cite{manchester2013transverse}, and extensions to constructive feedback design were given in \cite{manchester_control_2017}. 

In this paper we will construct convex sets of contracting observers for continuous and discrete-time nonlinear systems, as well as sampled-data observers, i.e. discrete-time observers for continuous-time systems. Sampled-data observer design is complicated by the inability to exactly solve a nonlinear differential equation over a sampling interval, and simple numerical integration techniques can cause instability \cite{arcak_framework_2004}. We will construct sampled-data observers via trapezoidal approximation, and prove that they remain stable.

Having constructed a convex set of observers, we propose optimizing over this set to find an observer that minimizes a measure of state estimation error on a finite data set. That is, our approach is based on {\em learning from data}, and is similar to the prediction error methods in black-box nonlinear system identification \cite{Sjoberg95, LjungBook} and recurrent neural networks 
\cite{pascanu2013difficulty}, and snapshot-based model reduction \cite{lucia2004reduced}. The central difference is that in prediction-error method the objective is to find a dynamical model of the system, whereas in this paper we assume a dynamical model is known, and the objective is to optimize a state estimator. This could be considered a form of regularization of prediction error methods: the identified model must behave as an observer for a certain known system. 

This paper builds upon a framework introduced for system identification and model reduction in  \cite{tobenkin2010convex, tobenkin_convex_2017}. In particular, we make use of similar convex parameterizations of stable systems and bounds on nonlinear system behaviour as those papers. The novel contributions of this paper are: (i) we show how these techniques can be applied to the observer design problem, (ii) we provide new convex sets of models for CT systems and exponentially contracting systems, (iii) we show that trapezoidal integration of CT observers preserves stability, and (iv) we propose a method to optimize mean-square error of observers based on simulated data. The benefits of the proposed approach are illustrated using simulations.

\section{Preliminaries}

\subsection{Notation}
In this paper, we will consider both continuous-time (CT) and discrete-time (DT) systems and signals. Let $\Z^+$ and $\R^+$ denote the non-negative integers and reals, respectively. Let $\ell_2^n$ denote the set of square-summable signals $\Z^+\to\R^n$. Similarly, $L_2^n$ denotes the subspace of square-integrable signals $\R^+\to \R^n$. We use subscripts for time indexing, i.e. for a signal $x$, $x_t\in R^n$ denotes its value at time $t \in \Z^+$ or $\R^+$. For a symmetric matrix, $A>0$ ($A\ge 0$) denotes positive-definiteness (semidefiniteness), and $A>B$ ($A\ge B$) means $A-B>0$ ($A-B\ge 0$). For a vector $x$, $|x|$ is the Euclidean norm, and $|x|^2_P$ is shorthand for $x'Px$ where $P>0$. We will also abuse notation slightly by using the shorthand $|H|^2_P =H'PH$ even if $H$ is matrix with more than one column.

\subsection{Observer Design}
In this subsection we describe the deterministic nonlinear observer design problem. In Section \ref{sec:learning} we will extend consideration to stochastic models with disturbances and measurement errors.

We consider a ``true system'' model with finite-dimensional state $x\in\R^n$,  driven by a known input $u\in\R^m$ and producing a measured output $y\in\R^p$. The dynamic and sensor models are given by the following equations
\begin{equation}\label{eq:sys}
\sigma x_t = a(x_t,u_t), \quad y = g(x_t,u_t),
\end{equation}
where the operator $\sigma$ represents $\frac{d}{dt}$ for CT models, and the shift operator $x_t\mapsto x_{t+1}$ for DT models. In the CT case, for simplicity of development we will assume existence and uniqueness of solutions for all $t\in[0,\infty)$.

In this paper, an observer for \eqref{eq:sys} is another dynamical system with a state estimate as its internal state $\z\in\R^n$, which takes the input and output of \eqref{eq:sys} as its own inputs:
\begin{equation}\label{eq:general_obs}
\sigma \z_t= \phi(\z_t,u_t,y_t),
\end{equation}
While more general notions of an observer are possible with different dimension to the true system \cite{praly2015observers}, for this paper we restrict attention to the simplest case when the observer's internal state is $\hat x$. 
We will assume that the functions $a, g, \phi$ are continuously differentiable in their arguments.

The observer design problem is to find a function $\phi$ in \eqref{eq:general_obs} such that $\hat x_t$ is uniquely-defined on $t\in[0,\infty)$ (which we call well-posedness), and guaranteed to converge (in some sense) to $x_t$ as $t\rightarrow\infty$. In this paper, we will focus on the following forms of convergence:
\begin{defn}
System \eqref{eq:general_obs} is said to be an {\em $L^2$ observer} for \eqref{eq:sys} if  for all $x_0\in\R^n, \z_0\in\R^n$
$
x_t-\hat x_t \in L_2^n
$
for CT or $\ell_2^n$ for DT observers, and if $\hat x_0=x_0$ then $\hat x_t=x_t$ for all $t\ge 0$. System \eqref{eq:general_obs} is said to be an {\em exponential observer} with rate $\lambda>0$ if the stronger condition holds that there exists a function $b:\R^n\times\R^n\to\R^+$  with $b(x,x)=0$ for all $x$, such that
 for all $x_0, \z_0\in\R^n$:
\begin{equation}\label{eq:exp}
|x_t-\hat x_t|\le e^{-\lambda t}b(x_0,\z_0).
\end{equation}
\end{defn}

\subsection{Contraction Analysis}

Contraction analysis \cite{Lohmiller98} studies the convergence of solutions of a dynamical system to {\em each other}, rather than to a particular pre-defined ``nominal'' solution as in standard Lyapunov analysis. Thus stability of a system can be established independent of knowledge of a particular solution, a property which is particularly well suited to observer design.

Contraction analysis is based on the study of the {\em differential dynamics} (a.k.a. variational, linearized, prolonged), defined along solutions of a system of the form \eqref{eq:general_obs}:
\[
\sigma \delta_t = \pder[\phi]{\z}\delta_t.
\]
A central result of \cite{Lohmiller98} for the CT case is that if there exists a uniformly bounded metric $M(x)$ such that
\[
\dot M + \pder[\phi]{x}'M+M\pder[\phi]{x}\le -2\lambda M,
\]
where $\dot M = \pder[M]{x}\phi$, 
then the system is contracting with rate $\lambda$. In the DT case we have the similar condition
\[
\pder[\phi]{x}'M(x_{t+1})\pder[\phi]{x}-M(x_t)\le (1-e^{-\lambda}) M(x_t).
\]

\subsection{Bounds and Identities for Quadratic Functions}
Throughout the paper we will frequently use the following simple property. 
Concave functions $g(c,P)=-c'P^{-1}c$ with $c\in\R^n, P=P'>0$ obey the upper bound:
\begin{equation}\label{eq:quadbound}
-c'P^{-1}c\le b'Pb-2b'c
\end{equation}
where the right-hand side is a convex function of $c, P$ for any fixed $b\in\R^n$. Inequality \eqref{eq:quadbound} 
follows directly from the expansion
$
 b'Pb-2b'c+c'P^{-1}c =   |c-Pb|^2_{P^{-1}} \ge 0.
$
From this expansion it is also clear that the bound  \eqref{eq:quadbound}  is tight if $c=Pb$. 

We will also frequently use the following version of the polarization identity, valid for $Q=Q'>0$, arbitrary matrices $E, F$ of appropriate dimension, and arbitrary scalar $h>0$, which follows from expansion of the right hand side:
\begin{equation}\label{eq:polarization}
  E'QF+F'QE = \tfrac{1}{h}\left[|E+\tfrac{h}{2}F|^2_Q-|E-\tfrac{h}{2}F|^2_Q\right].
\end{equation}

\section{Convex Sets of Nonlinear Observers}
In this paper, a {\em system set} is a pair $(\phi, \Theta)$ where $\phi:\RR^n\times\RR^m\times\RR^p \times\RR^q\rightarrow \RR^n$ is a continuously differentiable function, and $\Theta \subset \R^q$ is a set of $q$-dimensional parameter vectors. Associated with $(\phi, \Theta)$ are state-space dynamical systems of the form
\begin{eqnarray}
\sigma \z_t &=&\phi(\z_t,u_t, y_t,\theta),\label{exsys1}
\end{eqnarray}
where $\z\in\RR^n, u\in\RR^m, y\in\RR^p$, $\theta\in\Theta$. 

We define a  {\em contracting observer set} for a system \eqref{eq:sys} as any system set such that, for all $\theta\in\Theta$, the following two conditions are satisfied: 
\begin{enumerate}
\item \textbf{Contraction:} any pair trajectories $\z^a, \z^b$ of \eqref{eq:general_obs} with the same inputs $(u,y)$ but with different initial conditions $\z^a_0, \z^b_0$ converge, in one of the following two senses
\begin{enumerate}
  \item $L^2$ contraction, i.e. $
\z^a-\z^b \in L^2
$
 (or $\ell^2$ for DT observers).
\item exponential contraction with rate $\lambda>0$, i.e.
\begin{equation}\label{eq:exp}
|\z^a_0- \z^b_0|\le e^{-\lambda t}b(\z^a_0, \z^b_0).
\end{equation}
function $b:\R^n\times\R^n\to\R^+$  with $b(x,x)=0$.
\end{enumerate}
\item \textbf{Correctness:} when initialised with $\hat x_0=x_0$, the observer matches the true system, i.e. $\hat x_t = x_t$ for all $t\ge 0$. This is the case if and only if \begin{equation}\label{eq:correct}
  a(x,u) = \phi(x,u,g(x,u),\theta) \quad \forall x\in\R^n, u\in\R^m.
\end{equation}
\end{enumerate}
In what follows, we will usually drop the $\theta$ argument, and speak directly of searching over functions $\phi(\z,u,y)$.

It is obvious that these two conditions are sufficient for \eqref{eq:general_obs} to be an observer for \eqref{eq:sys} for any $\theta\in\Theta$: correctness implies that the true state $x$ is a particular solution of the observer, while contraction implies that all solutions of the observer converge to each other, hence all solutions of the observer converge to the true state. This characterization of a contracting observer is a special case of the idea of {\em virtual system} used to study observers and more general synchronization behaviors in \cite{wang_partial_2004}.


\subsection{Convex Sets of Contracting Continuous-Time Observers}\label{sec:CT}

In this section we will construct convex sets of contracting nonlinear observers for CT systems of the form
\begin{equation}\label{eq:ct_sys}
  \dot x_t = a(x_t,u_t), \quad y_t = g(x_t,u_t).
\end{equation}
We will construct our observers in the following {\em implicit} representation:
\begin{equation}\label{eq:ct_obs_ef}
  \ddt e(\z_t) = E(\z_t)\dot \z_t = f(\z_t,u_t,y_t).
\end{equation}
where $E(\z) = \pder{\z}e(\z)$. As noted in \cite{tobenkin2010convex, tobenkin_convex_2017} an implicit system representation significantly expands the flexibility of the system set while retaining convexity.

We will ensure that $E(\z)$ is invertible for all $\z$, and hence the observer can be rewritten in the explicit form
\begin{equation}\label{eq:CT_explicit}
\dot \z_t = \phi(\z_t,u_t,y_t)=  E(\z_t)^{-1}f(\z_t,u_t,y_t).  
\end{equation}
In principle, our results apply to search over infinite dimensional space of functions $e, f$ that are continuously differentiable, but in practical implementations $e$ and $f$ will be linearly parameterized by a finite-dimensional vector $\theta$:
\begin{align}
&e_\theta(\z) = \sum_{i = 0}^q \theta_i e_i(\z),\ 
f_\theta(\z,u,y) = \sum_{i = 0}^q \theta_i f_i(\z,u,y), 
\label{eq:linparam}
\end{align}
where $\theta\in\RR^q$ and each basis function $e_i:\ \RR^n\to\RR^n, f_i:\ \RR^n\times\RR^m\times\RR^p\to\RR^n$, is continuously differentiable in its arguments. In particular, if the basis functions are polynomials, then sum-of-squares programming can be used to guarantee global sign-definiteness \cite{parrilo2003semidefinite}.

To show contraction, we make use of the differential dynamics of \eqref{eq:ct_obs_ef}, defined along a particular solution $\z, u, y$:
\begin{equation}
  \ddt (E(\z_t)\d_t) = F(\z_t,u_t,y_t)\d_t.
\end{equation}
where $E(\z) = \pder{\z}e(\z)$ and  $F(\z,u_t,y_t) = \pder{\z}f(\z,u_t,y_t)$.
Following \cite{tobenkin2010convex}, we will consider contraction metrics of the form
$
V(\z,\d) = \d'E(\z)'P^{-1}E(\z)\d
$
where $P=P'>0$ is an auxiliary matrix variable. Contraction with respect to this metric can be established by differential dissipation inequalities:
\begin{equation}\label{eq:CT_contraction_H}
\ddt V(\z,\d) =	\d'(E'P^{-1}F+F'P^{-1}E)\d \le -\d' H\d
\end{equation}
for all $\d,\z,u,y$. In particular, $L^2$ contraction follows from any $H>0$, while exponential contraction with rate $\lambda$ follows from the choice $H = 2\lambda E'P^{-1}E$. The optimal bound $b(\z^a_0,\z^b_0)$ in \eqref{eq:exp} is the Riemannian energy with respect to $V$ \cite{manchester_control_2017}.

The correctness condition for observer \eqref{eq:ct_obs_ef} and system \eqref{eq:ct_sys} is the following linear (and hence convex) constraint:
\begin{equation}\label{eq:CT_correctness}
E(x)a(x,u) = f(x,g(x,u),u) \quad \forall x\in \R^n, u\in \R^m
\end{equation}
With $E$ invertible, this clearly implies \eqref{eq:correct} via \eqref{eq:CT_explicit}.

So the objective is to find an observer \eqref{eq:ct_obs_ef} with $E(\z)$ invertible for all $\z$ that satisfies the contraction condition \eqref{eq:CT_contraction_H} and the correctness condition \eqref{eq:CT_correctness}. The difficulty is that both invertibility of $E(\z)$ and \eqref{eq:CT_contraction_H} are nonconvex constraints on $\theta$.
We first recall the following useful result \cite[Thm. 1]{tobenkin_convex_2017}, giving a convex constraint for invertibility:
\begin{lem}\label{lem:wellposed}
	Let $E(x) = \pder[e]{x}$ for continuously differentiable function $x\mapsto e(x)$  mapping $\R^n \to \R^n$. Suppose
\begin{equation}\label{eq:well_posed_E}
 E(x) +E(x)'\ge \mu I, \quad \forall x\in\R^n
\end{equation}
 for some $\mu>0$, then $e$ is a bijection, and $E$ is non-singular for all $x$.
\end{lem}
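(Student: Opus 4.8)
The plan is to prove the three assertions in the natural order: nonsingularity of $E$ everywhere, injectivity of $e$, and finally surjectivity of $e$, which I expect to be the substantive step.

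First I would dispatch nonsingularity by contradiction. If $E(x)v=0$ for some $v\neq 0$, then $v'(E(x)+E(x)')v = 2v'E(x)v = 0$, which contradicts the hypothesis $v'(E(x)+E(x)')v\ge \mu|v|^2>0$. Hence $E(x)$ is nonsingular at every $x$. By the inverse function theorem this already makes $e$ a local diffeomorphism, and in particular an open map, a fact I will need for surjectivity.

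Next I would derive the key coercivity estimate. Integrating along the segment $\gamma(t)=y+t(x-y)$ gives $e(x)-e(y)=\int_0^1 E(\gamma(t))(x-y)\,dt$ by the fundamental theorem of calculus. Pairing with $x-y$ and using $(x-y)'E(\gamma(t))(x-y)=\tfrac12(x-y)'(E(\gamma(t))+E(\gamma(t))')(x-y)\ge \tfrac{\mu}{2}|x-y|^2$, I obtain the strong monotonicity bound $(x-y)'(e(x)-e(y))\ge \tfrac{\mu}{2}|x-y|^2$, and then by Cauchy--Schwarz the distance-expanding inequality $|e(x)-e(y)|\ge \tfrac{\mu}{2}|x-y|$. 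Injectivity is immediate, since $e(x)=e(y)$ forces $x=y$.

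Surjectivity is the main obstacle, and I would handle it with a connectedness argument (this is essentially Hadamard's global inverse function theorem specialized to the monotone case). The image $e(\R^n)$ is open because $e$ is an open map. It is also closed: if $e(x_k)\to z$, then the distance-expanding bound shows $\{x_k\}$ is Cauchy, so $x_k\to x$ for some $x$, and continuity gives $e(x)=z$. A nonempty subset of the connected space $\R^n$ that is simultaneously open and closed must equal $\R^n$, so $e$ is onto. The real work is done by the coercivity estimate, which is what upgrades the purely local diffeomorphism property to a global bijection.
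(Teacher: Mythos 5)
Your proof is correct. Note that the paper itself does not prove this lemma --- it is recalled from the cited reference (Theorem 1 of the Tobenkin et al.\ convex identification paper), so there is no in-paper argument to compare against. Your route is the standard one and, as far as the cited source goes, essentially the same: the line-integral identity $e(x)-e(y)=\int_0^1 E(y+t(x-y))(x-y)\,dt$ together with $E+E'\ge\mu I$ gives the strong monotonicity bound $(x-y)'(e(x)-e(y))\ge\tfrac{\mu}{2}|x-y|^2$, hence injectivity and the expansion estimate $|e(x)-e(y)|\ge\tfrac{\mu}{2}|x-y|$, and surjectivity follows from the open-and-closed (Hadamard-type) argument. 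All three steps --- nonsingularity of $E$, the coercivity estimate, and the connectedness argument for surjectivity --- are sound as written.
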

%
%
%





We now present two alternative choices of convex constraints guaranteeing the contraction condition \eqref{eq:CT_contraction_H}. 

\begin{enumerate}
	\item \textbf{CT1:} We restrict the space of functions $e$ to the class of $e(x) = Px$. Then \eqref{eq:CT_contraction_H} reduces to the condition\begin{equation}\label{eq:CT_contraction_P}
  F+F'+H\le 0
\end{equation}
with $H>0$ for $L^2$ contraction or $H=2\lambda P$ for exponential contraction with rate $\lambda$. This is jointly convex in $e, f, P$ and quasi-convex in $\lambda$.
\item \textbf{CT2:} Condition \eqref{eq:CT_contraction_H} is replaced with the stronger condition 
\begin{align}
	(E-\tfrac{h}{2}F)+(E-\tfrac{h}{2}F)'-P & \notag\\
	- (E+\tfrac{h}{2}F)'P^{-1}(E+\tfrac{h}{2}F) - H&\ge 0.\label{eq:CT_contraction_relax}
\end{align}
for some scalar $h>0$ and arbitrary matrix $H>0$ for $L^2$ contraction or $H=2\lambda E'PE$ for exponential contraction with rate $\lambda$.
\end{enumerate}

Both of these conditions are jointly-convex in $e,f, P$, and the exponential contraction conditions are quasi-convex in $\lambda>0$. Therefore the rate $\lambda$ can be maximized via bisection search with a convex feasibility problem at each step.

The following theorem summarizes the main results of this section:
\begin{thm}\label{thm:CTmain}
  Given a true system \eqref{eq:ct_sys}, a convex set of contracting observers is given by the system set \eqref{eq:ct_obs_ef}, \eqref{eq:linparam}, with $\Theta$ defined by the convex constraints \eqref{eq:well_posed_E}, \eqref{eq:CT_correctness}, and either \eqref{eq:CT_contraction_P} or \eqref{eq:CT_contraction_relax}, with $H$ as specified for $L^2$ or exponential contraction.
\end{thm}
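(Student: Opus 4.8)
The plan is to verify the two defining properties of a contracting observer set---convexity of $\Theta$, and contraction together with correctness for every $\theta\in\Theta$---by treating each constraint in turn and leaning on the identities and lemma already established.

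First I would establish convexity. Because the parameterization \eqref{eq:linparam} is linear in $\theta$, both $E$ and $F$, and hence every term entering the constraints, are affine in $\theta$. The well-posedness constraint \eqref{eq:well_posed_E} and the correctness constraint \eqref{eq:CT_correctness} are then respectively an affine matrix inequality and an affine equality in $\theta$, each required to hold pointwise in $(x,u)$; since an arbitrary intersection of convex sets is convex, the feasible sets they cut out are convex. For CT1, with $e(x)=Px$ one has $E=P$ and \eqref{eq:CT_contraction_P} is affine in $(\theta,P,H)$ (and affine in $(\theta,P)$ for the exponential choice $H=2\lambda P$ with $\lambda$ fixed), hence an LMI. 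For CT2, the only non-affine term in \eqref{eq:CT_contraction_relax} is $-(E+\tfrac{h}{2}F)'P^{-1}(E+\tfrac{h}{2}F)$; I would note that the matrix-fractional map $(X,P)\mapsto X'P^{-1}X$ is jointly convex for $P>0$, so this term is concave in $(\theta,P)$, the remaining terms are affine, and the inequality ``concave $\ge 0$'' defines a convex set (equivalently, one may expose an LMI via a Schur complement). Intersecting over all $(\z,u,y)$ again preserves convexity, and the rate $\lambda$ enters only through a scaling, giving quasi-convexity in $\lambda$.

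Next I would show each $\theta\in\Theta$ yields a genuine contracting observer. Invertibility of $E(\z)$, and hence well-posedness and the explicit form \eqref{eq:CT_explicit}, follow immediately from \eqref{eq:well_posed_E} by Lemma \ref{lem:wellposed}. Correctness is obtained by substituting $\z=x$ and $y=g(x,u)$ into \eqref{eq:CT_explicit} and using \eqref{eq:CT_correctness}: this exhibits $x$ as a solution of the observer driven by $(u,y)$, which is exactly \eqref{eq:correct}. For contraction under CT1 the substitution $E=P$ reduces \eqref{eq:CT_contraction_H} to $F+F'+H\le 0$, i.e.\ \eqref{eq:CT_contraction_P}, so the differential dissipation inequality holds directly, and $L^2$ (resp.\ exponential) contraction follows from the stated consequences of \eqref{eq:CT_contraction_H}.

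The step I expect to be the main obstacle is showing that CT2 implies \eqref{eq:CT_contraction_H}. Here I would apply the polarization identity \eqref{eq:polarization} with $Q=P^{-1}$ to write $E'P^{-1}F+F'P^{-1}E$ as $\tfrac{1}{h}$ times the difference of the two quadratic forms $(E+\tfrac{h}{2}F)'P^{-1}(E+\tfrac{h}{2}F)$ and $(E-\tfrac{h}{2}F)'P^{-1}(E-\tfrac{h}{2}F)$, then bound the concave (negated) term $-(E-\tfrac{h}{2}F)'P^{-1}(E-\tfrac{h}{2}F)$ from above using \eqref{eq:quadbound} with the choice $B=I$, which produces precisely the affine terms $(E-\tfrac{h}{2}F)+(E-\tfrac{h}{2}F)'-P$ appearing in \eqref{eq:CT_contraction_relax}. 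Combining these shows that \eqref{eq:CT_contraction_relax} with multiplier $H$ implies \eqref{eq:CT_contraction_H} with the bound rescaled to $H/h$, after which $L^2$ contraction ($H>0$) and exponential contraction ($H$ proportional to $E'P^{-1}E$) again follow from the established consequences of \eqref{eq:CT_contraction_H}. The delicate points will be carrying the factor $h$ correctly through both the polarization and the bound so that the rescaled multiplier retains the required sign-definiteness, and confirming that the matrix version of \eqref{eq:quadbound} applies verbatim to the matrix-valued $C=E-\tfrac{h}{2}F$.
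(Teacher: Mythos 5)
Your proposal is correct and follows essentially the same route as the paper: well-posedness via Lemma \ref{lem:wellposed}, correctness by substitution through the explicit form \eqref{eq:CT_explicit}, CT1 by direct reduction with $E=P$, and CT2 via the polarization identity \eqref{eq:polarization} followed by the quadratic bound \eqref{eq:quadbound} with $c=(E-\tfrac{h}{2}F)\d$, $b=\d$ (your ``$B=I$'' matrix form is the same computation). You correctly flag the only real subtlety --- the factor $h$ rescaling the multiplier $H$, which is harmless for $L^2$ contraction and only rescales the exponential rate --- and your added detail on convexity of the matrix-fractional term is consistent with what the paper asserts without proof.
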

\begin{proof}
As remarked above, well-posedness follows from \eqref{eq:well_posed_E} via Lemma \ref{lem:wellposed}. The correctness condition \eqref{eq:CT_correctness} is clearly equivalent to \eqref{eq:correct} via the explicit representation \eqref{eq:CT_explicit}.

To prove that \eqref{eq:CT_contraction_P} implies contraction for CT1, note that with $e(\z)=P\z$ the observer is 
\[
\dot \z = P^{-1}f(\z,u,y)
\] 
and contraction \eqref{eq:CT_contraction_H} follows directly from \eqref{eq:CT_contraction_P} since $E=P$.

To show that \eqref{eq:CT_contraction_relax} implies contraction for CT2, we first apply \eqref{eq:polarization} to \eqref{eq:CT_contraction_H}
	gives the equivalent condition
\[
	\d'\left[|E+\tfrac{h}{2}F|^2_{P^{-1}}-|E-\tfrac{h}{2}F|^2_{P^{-1}}\right]\d\le -2h\lambda \d E'P^{-1}E\d
\]
Then applying \eqref{eq:quadbound} with $c = (E-\tfrac{h}{2}F)\d$ and $b =\d$ shows that \eqref{eq:CT_contraction_relax} guarantees \eqref{eq:CT_contraction_H}.
\end{proof}

\begin{remark}
	By using different quantities for $b$ in \eqref{eq:quadbound} one obtains a family of convex sets guaranteeing contraction. For example, an iterative refinement procedure could start with CT1 or CT2 and take advantage of the fact that \eqref{eq:quadbound} is tight when $c=Pb$ and set $b_{k+1} = P_k^{-1}c_k = P_k^{-1}(E_k-F_k)$, where $k$ indexes iteration number, such that the convex bound is tight at the result of the previous iteration. 
\end{remark}





%

\subsection{Convex Sets of Contracting Discrete-Time Observers}

This section contains parallel results for DT systems:
\begin{equation}\label{eqn:sys_DT}
x_{t+1}=a(x_t,u_t,), \quad y_t = g(x_t,u_t).
\end{equation}
The observers we will consider are of the implicit form
\begin{eqnarray}
e_\theta(\z_{t+1})&=&f_\theta(\z_t,u_t,y_t),\label{eq:DT_observer}
\end{eqnarray}
with $e,f$ parameterized as in \eqref{eq:linparam}, 
but we will ensure that $e$ is a bijection, and hence the observer can be written in the explicit form
\begin{equation}
  	\z_{t+1} =\phi(\z_t,u_t,y_t) =  e_\theta^{-1}(f_\theta(\z_t,u_t,y_t))
\end{equation}
Well-posedness, i.e. the fact that $e$ is a bijection, will be guaranteed by ensuring $E+E'\ge \mu I$, and applying Lemma \ref{lem:wellposed}. A solution can be generally be obtained rapidly by Newton's method.

The correctness condition in DT is
\begin{equation}\label{eq:DT_correctness}
  e(a(x,u))=f(x,u,g(x,u)) \quad \forall x\in\R^n, u\in\R^m.
\end{equation}
This constraint is linear (and hence convex) in $e$ and $f$ for any known functions $a, g$.


To study contraction, we make use of the differential dynamics of \eqref{eq:DT_observer}, given by
\begin{equation}
  E_{t+1}\d_{t+1} = F_t\d_t
\end{equation}
where we have used the shorthand 
$
E_{t}:=E(\z_{t})$ and $
F_{t}:=F(\z_{t},u_{t},y_{t})$.

Similarly to the CT case, contraction for the DT system will be evaluated via a metric of the form
$
V(\z, \d) = \d'E(\z)'P^{-1}E(\z)\d.
$
The contraction condition is then 
\begin{equation}\label{eq:DT_contraction}
  V_{t+1}-V_t = \d_t'(F'_tP^{-1}F_t-E_t'P^{-1}E_t)\d_t\le - \d_t' H\d_t
\end{equation}
again using shorthand $
V_{t}:=V(\z_{t},\d_t)$, and $H>0$ for $\ell^2$ contraction, and $H = (1-e^{-\lambda})E_t'P^{-1}E_t$ for exponential convergence with rate $\lambda$. Condition \eqref{eq:DT_contraction} is not jointly convex in $e,f,P$ due to the concave term $-E_t'P^{-1}E_t$.

Using \eqref{eq:quadbound} to convexify \eqref{eq:DT_contraction} with $b=\d_t$, $C=E_t\d_t$ we obtain the following:
\begin{align}
  E(\z)+E(\z)'-P-F(\z,u,y)'P^{-1}F(\z,u,y)-H\ge 0\label{eq:DT_relax}
\end{align}
for all $\z\in\R^n, u\in\R^m, y\in\R^p$. To summarize the construction in this section:
\begin{thm}
Given a true system \eqref{eqn:sys_DT}, a convex set of contracting observers is given by the parameterization \eqref{eq:DT_observer}, \eqref{eq:linparam} where $\Theta$ is defined by the following constraints
\begin{enumerate}
  \item Contraction: for some $P>0$, \eqref{eq:DT_relax} holds 
with $H>0$ arbitrary for $\ell_2$ contraction, and $H = 2\lambda E'PE$ for exponential contraction with rate $\lambda$.
\item Correctness: \eqref{eq:DT_correctness} holds for all $x,u$.
\end{enumerate}which are convex in the unknowns $e, f, P$.
\end{thm}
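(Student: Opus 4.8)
The plan is to verify the three defining properties of a contracting observer set --- well-posedness, correctness, and contraction --- and then note convexity, mirroring the proof of Theorem~\ref{thm:CTmain}. For well-posedness I would first observe that bijectivity of $e$ is already forced by the contraction constraint: since $P>0$ and $F'P^{-1}F,\,H\ge 0$, inequality \eqref{eq:DT_relax} gives $E(\z)+E(\z)'\ge P\ge\lambda_{\min}(P)I=:\mu I>0$ for every $\z$, so Lemma~\ref{lem:wellposed} applies and the explicit update $\z_{t+1}=e^{-1}(f(\z_t,u_t,y_t))$ is globally and uniquely defined. Correctness is then immediate: because $e$ is a bijection, applying $e^{-1}$ to \eqref{eq:DT_correctness} yields $a(x,u)=e^{-1}(f(x,u,g(x,u)))$, which is exactly the abstract condition \eqref{eq:correct} for the explicit observer.

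The substance of the argument is contraction, for which I would use the metric $V(\z,\d)=\d'E(\z)'P^{-1}E(\z)\d=|E(\z)\d|^2_{P^{-1}}$ together with the differential dynamics $E_{t+1}\d_{t+1}=F_t\d_t$. The key step is that, along the variational flow, $V_{t+1}=|E_{t+1}\d_{t+1}|^2_{P^{-1}}=|F_t\d_t|^2_{P^{-1}}=\d_t'F_t'P^{-1}F_t\d_t$, which produces the exact but nonconvex increment \eqref{eq:DT_contraction}. To remove the concave term $-E_t'P^{-1}E_t$ I would apply the bound \eqref{eq:quadbound} with $c=E_t\d_t$ and $b=\d_t$, obtaining $-\d_t'E_t'P^{-1}E_t\d_t\le\d_t'\bigl(P-E_t-E_t'\bigr)\d_t$; substituting this into \eqref{eq:DT_contraction} shows that \eqref{eq:DT_relax} is sufficient for $V_{t+1}-V_t\le-\d_t'H\d_t$. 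This is the same relaxation of a concave term used to pass from \eqref{eq:CT_contraction_H} to CT2, except that here $-E_t'P^{-1}E_t$ already stands alone, so no polarization identity is needed. Convexity of the resulting set then follows by inspection: \eqref{eq:DT_correctness} is affine in the parameters, and since $F'P^{-1}F$ is jointly convex in $(F,P)$ the left-hand side of \eqref{eq:DT_relax} is concave, making the constraint convex (an LMI after a Schur complement).

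The step I expect to demand the most care is converting this pointwise metric decrease into genuine convergence of two observer trajectories $\z^a,\z^b$ driven by the same $(u,y)$. Following standard contraction analysis \cite{Lohmiller98, manchester_control_2017}, I would connect $\z^a_t$ and $\z^b_t$ by a smooth path and integrate $V$ along it. For exponential contraction the specified $H$ makes the increment satisfy $V_{t+1}\le e^{-\lambda}V_t$, so the Riemannian distance between the trajectories contracts geometrically and the bound \eqref{eq:exp} follows with $b$ taken as the Riemannian energy of the metric; for $\ell_2$ contraction, summing $\d_t'H\d_t\le V_t-V_{t+1}$ over $t$ bounds $\sum_t\d_t'H\d_t$ by $V_0$ and thereby places $\d$, and hence $\z^a-\z^b$, in $\ell_2$. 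The genuine subtlety is that this path argument yields a global conclusion only if $V$ is a uniformly valid metric, i.e. $E'P^{-1}E$ is bounded above and below; the lower bound is a consequence of $E+E'\ge\mu I$ established above, while the upper bound must be secured through boundedness of the basis matrices, exactly as in the CT development.
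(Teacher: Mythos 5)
Your proposal is correct and follows essentially the same route as the paper: the same relaxation of the concave term $-E_t'P^{-1}E_t$ via \eqref{eq:quadbound} with $c=E_t\d_t$, $b=\d_t$ to pass from \eqref{eq:DT_contraction} to \eqref{eq:DT_relax}, the same appeal to Lemma~\ref{lem:wellposed} for well-posedness (which the paper dispatches in one line), and the same linearity observation for correctness. The paper leaves the path-integration step from the differential inequality to trajectory convergence implicit in the cited contraction-analysis framework, so your final paragraph simply fills in detail the paper omits.
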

Note that well-posedness follows automatically from \eqref{eq:DT_relax} and Lemma \ref{lem:wellposed}.

\subsection{Contracting Sampled-Data Observers for CT Systems}\label{sec:SD}

It is common in applications that a dynamical model is known in continuous time (e.g. from physical laws), but an observer is to be implemented in discrete time on a computer.

For nonlinear systems, the challenge is that exact solution of CT differential equation is not available, so precise sampling is generally not possible. Simple methods, such as Euler integration, can destroy stability properties of CT observers \cite{arcak_framework_2004}. In this section we show that trapezoidal integration:
\[
z_{t+1}-z_t \approx \frac{h}{2}(\dot z_{t+1}-\dot z_t)
\]
where $h>0$ is the time-step, is particularly well-suited to sampled-data implementation of contracting observers.
It has long been known that trapezoidal integration enjoys favourable stability properties. In fact it is the most accurate linear multi-step method that guarantees stability when integrating arbitrary linear systems \cite{dahlquist_special_1963}.

Given a CT observer satisfying the conditions of Section \ref{sec:CT}, a sampled-data observer can be constructed as
\begin{equation}\label{eq:SD_observer}
  e(\z_{t+h})-\frac{h}{2}f(\z_{t+h},u_{t+h},y_{t+h}) = e(\z_{t})+\frac{h}{2}f(\z_{t},u_{t},y_{t}).
\end{equation}
Note that the update from $t\rightarrow t+h$ depends on measured data of $u$ and $y$ at both times $t$ and $t+h$.

\begin{thm}
	For a CT system \eqref{eq:ct_sys} and any CT observer satisfying the conditions of Theorem \ref{thm:CTmain}, the sampled-data observer given in \eqref{eq:SD_observer} is
well-posed,
 $\ell^2$ stable, and  ``correct'' in the sense that it matches the trapezoidal integration of the CT system
\end{thm}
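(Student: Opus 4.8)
The plan is to prove the three claims by exploiting the fact that the matrix combinations $E\pm\tfrac h2 F$ appearing in \eqref{eq:SD_observer} are exactly those already controlled by the CT contraction conditions of Theorem \ref{thm:CTmain}; this alignment is what makes trapezoidal integration compatible with the construction of Section \ref{sec:CT}.

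First I would establish well-posedness. For fixed $u_{t+h},y_{t+h}$, equation \eqref{eq:SD_observer} defines $\z_{t+h}$ as the solution of $\bar e(\z)=c$, where $c$ is the known right-hand side and $\bar e(\z)=e(\z)-\tfrac h2 f(\z,u_{t+h},y_{t+h})$ has Jacobian $\bar E=E-\tfrac h2 F$. By Lemma \ref{lem:wellposed} it suffices that $\bar E+\bar E'\ge\mu' I$ for some $\mu'>0$. For CT1 this holds because $E=P>0$ and \eqref{eq:CT_contraction_P} gives $F+F'\le -H\le 0$, so $\bar E+\bar E'=2P-\tfrac h2(F+F')\ge 2P$; for CT2 it is immediate from \eqref{eq:CT_contraction_relax} after discarding the positive-semidefinite terms $P$, $(E+\tfrac h2 F)'P^{-1}(E+\tfrac h2 F)$ and $H$. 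Thus $\bar e$ is a bijection and $\z_{t+h}$ is uniquely defined.

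Next I would prove $\ell^2$ stability via the differential dynamics of \eqref{eq:SD_observer}, namely $(E_{t+h}-\tfrac h2 F_{t+h})\d_{t+h}=(E_t+\tfrac h2 F_t)\d_t$. The key choice is the discrete metric $V_t=|(E_t-\tfrac h2 F_t)\d_t|^2_{P^{-1}}$, which is positive definite precisely because $E_t-\tfrac h2 F_t$ is nonsingular, as just shown. Substituting the differential dynamics gives $V_{t+h}=|(E_t+\tfrac h2 F_t)\d_t|^2_{P^{-1}}$, hence $V_{t+h}-V_t=\d_t'[\,|E_t+\tfrac h2 F_t|^2_{P^{-1}}-|E_t-\tfrac h2 F_t|^2_{P^{-1}}\,]\d_t$. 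Applying the polarization identity \eqref{eq:polarization} with $Q=P^{-1}$ collapses this exactly to $h\,\d_t'(E_t'P^{-1}F_t+F_t'P^{-1}E_t)\d_t$, i.e.\ $h$ times the left-hand side of the CT contraction inequality \eqref{eq:CT_contraction_H}. The conditions of Theorem \ref{thm:CTmain} then give $V_{t+h}-V_t\le -h\,\d_t'H\d_t$, and summing over $t$ with $H>0$ yields $\d\in\ell^2$; the standard path-integral (Riemannian-length) argument transfers this to $\z^a-\z^b\in\ell^2$ for any two trajectories.

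Finally, correctness follows by restricting \eqref{eq:SD_observer} to the true trajectory. The CT correctness identity \eqref{eq:CT_correctness} reads $E(x)a(x,u)=f(x,g(x,u),u)$, i.e.\ $\ddt e(x_t)=f(x_t,u_t,y_t)$ along solutions of \eqref{eq:ct_sys}; applying the trapezoidal rule to this identity reproduces \eqref{eq:SD_observer} with $\z$ replaced by $x$, so the trapezoidally-integrated true state solves the sampled-data observer, and by well-posedness it is the unique solution started from $\z_0=x_0$. The step I expect to be the main obstacle is the choice of metric in the stability argument: it must be built from $E-\tfrac h2 F$, not $E$, so that the discrete increment $V_{t+h}-V_t$ matches the CT contraction rate with \emph{no} discretisation error. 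This exact cancellation through polarization is the crux of the result, and it is what intertwines well-posedness (which supplies positivity of $V_t$) with stability; care is also needed to take the scalar $h$ in the CT2 relaxation equal to the sampling period so that the two uses of $E\pm\tfrac h2 F$ coincide.
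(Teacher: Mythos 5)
Your proof is correct and follows essentially the same route as the paper's sketch: well-posedness of $E-\tfrac{h}{2}F$ via Lemma \ref{lem:wellposed} (directly from $E=P>0$, $F+F'\le 0$ for CT1, and by discarding the semidefinite terms in \eqref{eq:CT_contraction_relax} for CT2), the discrete metric $V_t=|(E_t-\tfrac{h}{2}F_t)\delta_t|^2_{P^{-1}}$ whose increment collapses via the polarization identity \eqref{eq:polarization} to $h$ times the left-hand side of \eqref{eq:CT_contraction_H}, and correctness by substituting the true trajectory into \eqref{eq:SD_observer}. You even write the increment $V_{t+h}-V_t$ with the correct sign (the paper's sketch transposes the two squared terms), and you make explicit two points the paper leaves implicit: the correctness step via trapezoidal integration of \eqref{eq:CT_correctness}, and the requirement that the CT2 parameter $h$ coincide with the sampling period for the well-posedness argument to apply as stated.
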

\begin{proof}
We sketch the proof due to space restrictions. Well-posedness of the observer \eqref{eq:SD_observer} is guaranteed by Lemma \ref{lem:wellposed} if $E(\z)-\tfrac{h}{2}F(\z,u,y)+E(\z)'-\tfrac{h}{2}F(\z,u,y)'\ge \mu I$. If CT1 is used, then by construction $E+E'>0$ and $F+F'\le 0$ so clearly the observer is well-posed for any $h$. For CT2, by inspection the only term on the left-hand-side of \eqref{eq:CT_contraction_relax} that is not negative-semidefinite is $(E-\tfrac{h}{2}F)+(E-\tfrac{h}{2}F)'$. Therefore since $P>0$ this term is positive-definite, which is precisely the condition required for well-posedness.

Both CT1 and CT2 imply \eqref{eq:CT_contraction_H}. For the sampled data observer consider the 
differential dynamics:
\[
(E_{t+1}-\tfrac{h}{2}F_{t+h})\d_{t+h} = (E_t+\tfrac{h}{2}F_t)\d_t
\]
and metric 
\[
V(\z,\d,t) = \delta'(E_t-\tfrac{h}{2}F_t)'P^{-1}(E_t-\tfrac{h}{2}F_t)\delta
\]
Then $
V(\z_{t+h},\d_{t+h},t+h)-V(\z,\d,t)=\d(|E_t-\tfrac{h}{2}F_t|_{P^{-1}}^2-|E_t+\tfrac{h}{2}F_t|_{P^{-1}}^2)
$. By the polarization identity and \eqref{eq:CT_contraction_H} this difference is uniformly negative-definite, and hence the observer is $\ell^2$ contracting
\end{proof}

Note that exponential convergence with the same rate $\lambda$ is {\em not} guaranteed, since the metric  used to evaluate contraction is different.

\subsection{Flexibility of the Observer Sets}

%
%


To be able to find observers for a wide class of systems, it is of course beneficial to have as flexible a set of observers as possible. The main result regarding flexibility is the following:

\begin{thm} Considering observers in the span of \eqref{eq:linparam}, \begin{enumerate}
  \item CT1, CT2, and DT contain all observers that can be written in the form
    \[
    \sigma\z_t = A_o\z_t + q(u_t,y_t)
    \]
    where $A_o$ is a Hurwitz stable matrix for CT or Schur stable matrix for DT.
    \item CT1 and DT additionally contain all nonlinear observers that are contracting with respect to a constant metric $V(\z,\d) = \d'M\d$ where $M=M'>0$.
\end{enumerate}
\end{thm}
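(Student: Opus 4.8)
The plan is to exhibit, for each claimed observer, an explicit realization $(e,f,P)$ (together with a step $h>0$ in the CT2 case) that lies in the relevant convex set, and then to check that the explicit dynamics it generates coincide with the target observer. The guiding observation is that CT1 imposes no relaxation at all (it directly fixes $e(\z)=P\z$, hence $E\equiv P$), while the convexifying bound \eqref{eq:quadbound} underlying \eqref{eq:DT_relax} is tight precisely when $E=P$; this is exactly why Part 2 concerns CT1 and DT, whose tightness condition is $h$-free, and not CT2, whose tightness requires $E-\tfrac{h}{2}F=P$.

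For Part 1 I would treat all three sets with a single Lyapunov-based construction. Since $A_o$ is Hurwitz (resp.\ Schur), there is $P=P'>0$ solving $PA_o+A_o'P=-Q$ (resp.\ $A_o'PA_o-P=-Q$) for some $Q>0$. Set $e(\z)=P\z$, so $E\equiv P$ and \eqref{eq:well_posed_E} holds with any $\mu$ below the least eigenvalue of $2P$, and set $f(\z,u,y)=P(A_o\z+q(u,y))$, so that $F\equiv PA_o$ and the explicit observer $E^{-1}f$ reproduces $\sigma\z=A_o\z+q$. For CT1, \eqref{eq:CT_contraction_P} reads $PA_o+A_o'P+H\le 0$, satisfied for any $0<H\le Q$; for DT, substituting $E=P$ collapses \eqref{eq:DT_relax} to $P-A_o'PA_o-H\ge 0$, again satisfied for $0<H\le Q$. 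The exponential variants follow by taking $H=2\lambda P$ with $\lambda$ small enough that $H\le Q$.

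The only genuinely computational case is CT2 in Part 1. With the same realization I would expand \eqref{eq:CT_contraction_relax}, using $PA_o+A_o'P=-Q$ to simplify both the affine part $(E-\tfrac{h}{2}F)+(E-\tfrac{h}{2}F)'-P$ and the quadratic part $(E+\tfrac{h}{2}F)'P^{-1}(E+\tfrac{h}{2}F)$. The cross terms each contribute $\pm\tfrac{h}{2}Q$, and the net left-hand side reduces to $hQ-\tfrac{h^2}{4}A_o'PA_o-H=h\bigl(Q-\tfrac{h}{4}A_o'PA_o\bigr)-H$. Since $Q>0$, the bracket is positive-definite for all sufficiently small $h>0$, so the constraint is feasible with a suitable $H>0$. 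This $O(h)$-versus-$O(h^2)$ scaling is the crux of the CT2 argument, and well-posedness is inherited from $E+E'=2P>0$.

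For Part 2 I would use the given constant metric directly. Writing the observer explicitly as $\sigma\z=\phi(\z,u,y)$ with Jacobian $J=\pder[\phi]{\z}$, constant-metric contraction means $MJ+J'M\le -H$ in CT (resp.\ $J'MJ-M\le -H$ in DT). I set $e(\z)=M\z$, $P=M$, and $f=M\phi$, so $E\equiv M$, $F=MJ$, and the explicit observer recovers $\phi$. For CT1, \eqref{eq:CT_contraction_P} is then literally $MJ+J'M+H\le 0$; for DT, tightness of \eqref{eq:quadbound} at $E=P=M$ collapses \eqref{eq:DT_relax} exactly to $M-J'MJ-H\ge 0$, i.e.\ the contraction inequality, with no conservatism. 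Well-posedness again holds since $2M>0$, and exponential rates transfer verbatim. Two caveats I would flag: every realization requires $e,f$ to lie in the span of \eqref{eq:linparam}, so the statement is read relative to a basis rich enough to contain $P\z$, $M\z$, and $M\phi$; and the correctness condition is a separate affine constraint inherited automatically whenever the target observer is itself correct, so it does not obstruct membership. The main obstacle is the CT2 expansion—specifically verifying that the destabilizing $\tfrac{h^2}{4}A_o'PA_o$ term is dominated for small $h$—while everything else reduces to substitution and a direct appeal to the Lyapunov or contraction inequality.
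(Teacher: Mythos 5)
The paper omits the proof of this theorem entirely (it defers to ``similar results'' in the Tobenkin et al.\ references), so there is no in-paper argument to compare against line by line; judged on its own, your proof is correct. The substitution $e(\z)=P\z$, $f=P(A_o\z+q)$ with $P$ the Lyapunov certificate for $A_o$ (and $e(\z)=M\z$, $f=M\phi$, $P=M$ for Part 2) is exactly the construction used in the cited works for the analogous stable-model sets, and your expansion of \eqref{eq:CT_contraction_relax} to $hQ-\tfrac{h^2}{4}A_o'PA_o-H$ correctly isolates the one genuinely nontrivial point, namely that the destabilizing quadratic term is $O(h^2)$ and hence dominated by $hQ$ for small $h$. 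The caveats you flag --- that membership is relative to a basis in \eqref{eq:linparam} rich enough to contain $P\z$, $M\z$, and $M\phi$, and that correctness is a separate affine constraint inherited whenever the target observer is itself correct --- are the right ones and do not weaken the claim.
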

we omit the proof due to space restrictions. Similar results are proved in \cite{tobenkin2010convex,tobenkin_convex_2017}.

  This implies that observers exist in these sets for any system of the form
  \[
  \sigma x_t = Ax_t + q(u_t,y_t),\quad y_t = Cx_t+r(u_t)
  \]
  where the pair $(A, C)$ is detectable, including all full-order observers for linear detectable systems.

From Theorem 1, it may appear that CT1 is more flexible than CT2. In fact CT1 contains systems that CT2 does not, and CT2 contains systems that CT1 does not. In particular, considering polynomial bases for $e,f$, CT1 contains the system:
\[
\dot\z = -\z-\z^3
\]
with $e(\z) = \z, f(\z) = -\z-\z^3$, since $F(\z) =-1-3\z^2$ is uniformly negative-definite.

CT2 does not contain this system:
It is clear from \eqref{eq:CT_contraction_relax} that $(E-F)+(E-F)'$ must be larger in magnitude than $(E+F)'P^{-1}(E+F)$, which is impossible if $E$ is constant and $F$ is unbounded in magnitude as $\z\mapsto\infty$, as is the case if $f$ is a polynomial of degree $>1$.

On the other hand, CT2 contains systems that CT1 does not. In particular, the system
\[
\dot\z =\phi(\z)= -\frac{\z+\z^3}{0.1+3\z^2}
\]
from $e(\z) = 0.1\z+\z^3, f(\z) = -\z-\z^3$. It can be seen from the graph of $\phi$ in Fig. \ref{fig:nonquad} that the slope is positive, and hence there is no constant $P>0$ such that $P\pder[\phi]{\z} +\pder[\phi]{\z}'P<0$. However, by direct evaluation this system satisfies \eqref{eq:CT_contraction_H} with $P=1, H=0.1$. However, it is contracting with respect to the metric $V(\z,\d) = \d'E(\z)'P^{-1}E(\z)\d = (0.1+3\z^2)^2\d^2$.

\begin{figure}\centering
  \includegraphics[width=0.85\columnwidth]{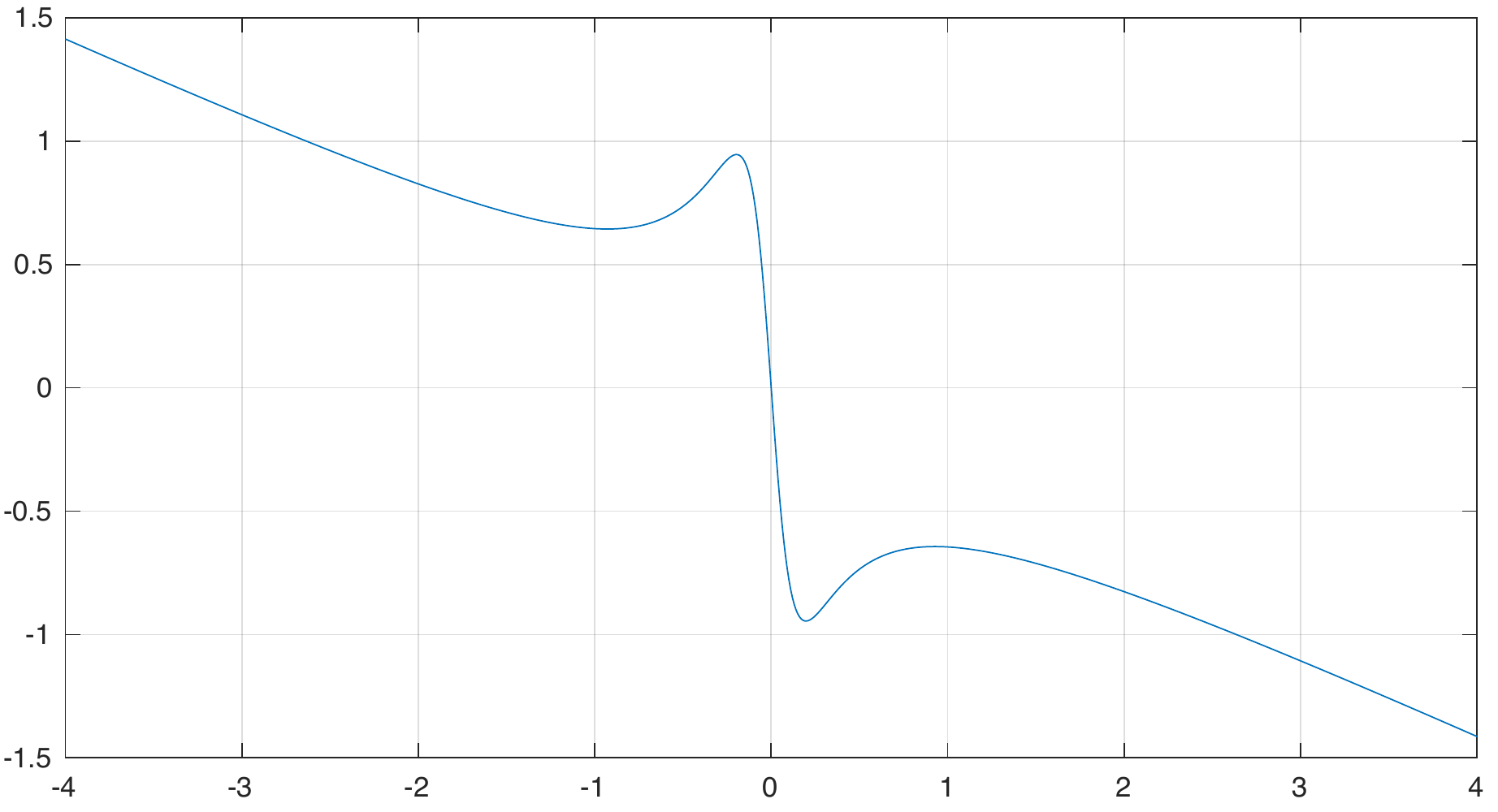}
  \caption{Graph of the function $\phi(\z) = -{(\z+\z^3)}/{(0.1+3\z^2)}$. The positive slope around $\z = \pm \tfrac{1}{2}$ implies that the system $\dot\z = \phi(\z)$ is not contracting with respect to a constant metric. However, it is in the set CT2.}
  \label{fig:nonquad}
\end{figure}

\section{Learning from Data}\label{sec:learning}
Within a set of stable observers, it is natural to try to find the one that is ``best'' in some sense. 
A standard way to formulate the problem is to specify a particular stochastic model of the true system, e.g. in the DT case
\begin{equation}\label{eq:DT_sys_noisy}
  x_{t+1} = a(x_t,u_t,w_t), \quad y = g(x_t,u_t,w_t)
\end{equation}
where $u_t$ is a known input, and $w_t$ represents measurement and/or process noise, and is unknown but drawn from a known probability distribution (we assume $w=0$ corresponds to a ``nominal'' model). Initial conditions may be known, or also drawn from a probability distribution. Then a measure of state estimation error is chosen, e.g. the mean-square error:
\[
J = \lim_{T\rightarrow\infty} \frac{1}{T}\sum_{t=0}^T\mathcal E[(x_t-\hat x_t)'(x_t-\hat x_t)],
\]
where $\mathcal E$ denotes the expectation operator (see, e.g., \cite{anderson2012optimal}). Unfortunately there is no computationally tractable solution to this problem for most nonlinear systems, so some form of approximation is required.

We propose an approach based on ``learning from data''. 
The procedure we propose is outlined as follows:
\begin{enumerate}
  \item Construct a ``flexible'' set of contracting nonlinear observers for the nominal ($w=0$) model.
  \item Draw one or more realisations of $w$ and simulate the stochastic model \eqref{eq:DT_sys_noisy}, collecting data sets $\tilde z = \{\tilde x_t, \tilde u_t, \tilde y_t | t = 0, 1, ..., T\}$ as the realisations of $x,u,y$.
  \item Optimize over the set of observers for one which minimizes the empirical mean-square error:
  \[
  J = \frac{1}{T}\sum_{t=0}^T|\hat x_t-\tilde x_t|^2
  \]
  where $\hat x$ are solutions of the observer simulated with data $\tilde u_t, \tilde y_t$, and an estimate of $\hat x_0$. When multiple realisations of $w$ are sampled, $J$ is the sum of the corresponding mean-square errors.
\end{enumerate}
If a method is available for simulating a stochastic CT model, then the same approach can be applied to optimize sampled-data observers by subsampling at fixed rate.

\begin{remark}
	In the linear case, the constraint that the learned observer is an observer for the nominal system means the resulting estimator is unbiased when $w$ is zero mean. In the nonlinear case this is not generally true, but it can be considered as a form of regularization that introduces a bias to ensure reliable behaviour on data  not in the training set.
\end{remark}

While the above recipe can be followed using generic nonlinear programming methods to minimize $J$, this may prove challenging since $J$ is a highly non-convex function of the observer parameters even in the linear case. We construct a convex approximation based on Lagrangian relaxation of linearized estimation error, similar to the method proposed in \cite{tobenkin_convex_2017} for system identification.

First we construct the linearization along the sampled ``true solution'' $\tilde x$, by introducing the local deviation $\tilde \D_t \approx \hat x_t-\tilde x_t$ obeying the dynamics
\begin{equation}
  E(\tilde x_{t+1}) \D_{t+1} = F(\tilde x_t,\tilde u_t) \tilde\D_t +
  \e_t.\label{eq:linsimdyn}
\end{equation}
with $\tilde\D_0=0$, where the equation errors $\e_t$ are defined as
\[
\e_t := e(\tilde x_{t+1}, u_{t+1}, y_{t+1})-f(\tilde x_t, \tilde u_t, \tilde y_t)
\]
Then the local mean-square error is
\begin{equation}
  J^0(\theta, \tilde z) = \frac{1}{T}\sum_{t=0}^T |\D_t|^2,
\end{equation}
Roughly speaking, when the state estimation error is small, i.e. $\tilde \D_t \approx \hat x_t-\tilde x_t$ then $J\approx J^0$. A more formal derivation for system identification problems can be found in \cite[Sec. V.B]{tobenkin_convex_2017}.
Note that if the observer dynamics are affine in $\z$, then $J$ and $J^0$ are identical.


The problem of minimizing $J^0$ remains nonconvex, but a semidefinite programming bound can be constructed using Lagrangian relaxation \cite{tobenkin_convex_2017}. This is most clearly expressed be rewriting \eqref{eq:linsimdyn} in a ``lifted'' representation
\[
H(\theta, \tilde z)\vec\D = \vec\e(\theta, \tilde z)
\]
with the stacked vectors $\vec\D:=[\tilde\D_1', \tilde\D_2', ..., \tilde\D_T']'$ and $\vec\e(\theta, \tilde z): = [\e_0', \e_1', ... \e_T']'$ and $H(\theta, \tilde z)$ is a block lower-bidiagonal matrix that is affine in $\theta$ and is straightforward to construct from \eqref{eq:linsimdyn}. 

Then a convex upper bound for $J^0$ is given by
\begin{align*}
\hat J^0(\theta, \tilde z):= \sup_{\vec\D\in l_T^{nT}}&\{ |\vec\D|^2 -2\vec\D'(H(\theta, \tilde z)\vec\D-\vec\e(\theta, \tilde z)) \}.
\end{align*}
It can be shown that for all systems satisfying the contraction constraint, the supremum over $\vec\D$ is finite, and can be written in explicit form:
\begin{equation}\label{eq:Jhat_exp}
  \hat J^{0}(\theta, \tilde z) = \vec\e(\theta, \tilde z)'\mathcal H(\theta,z)^{-1}\vec\e(\theta, \tilde z)
\end{equation}
where $\mathcal H = H+H'-I$ or, via Schur complement, in the equivalent semidefinite programming representation:
\begin{equation}\label{eq:Jhat_sdp}
\hat J^{0}(\theta, \tilde z) =\min s\quad \textrm{ s.t. }  \bm{s & \vec\e(\theta, \tilde z)'\\
  \vec\e(\theta, \tilde z) &\mathcal H(\theta,z)}\ge 0.
\end{equation}


Algorithms have been developed for efficient solution of problems with this structure \cite{Umenberger16, andersen_implementation_2010}. Both have computational complexity that is linear in the data length $T$, rather than cubic for a generic semidefinite programming solver.

The main result regarding $\hat J^0$ is the following:
\begin{thm}\label{thm:rieldt}
For all  $P=P'>0$ and signal data $\tilde z$, for every $\theta\in\Theta$,
$ J^0(\theta, \tilde z) \le \hat J^0(\theta, \tilde z)<\infty.
$
Furthermore, if the data set $\tilde z$ is generated with $w=0$, then the optimal values satisfy
$
J^{0,\star}(\tilde z) = \hat J^{0,\star}(\tilde z) =0.
$


\end{thm}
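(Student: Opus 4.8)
The plan is to treat the two claims separately: the chain $J^0 \le \hat J^0 < \infty$, and the vanishing of the optimal values when $w=0$. Throughout I write $\vec\D^\star$ for the genuine linearized deviation, i.e. the unique solution of $H(\theta,\tilde z)\vec\D^\star = \vec\e(\theta,\tilde z)$ obtained by simulating \eqref{eq:linsimdyn}, and I read the hypothesis ``for all $P>0$'' as: $P$ is the contraction certificate attached to the given $\theta\in\Theta$ via \eqref{eq:DT_relax}.

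For the inequality $J^0 \le \hat J^0$, I would note that $\vec\D^\star$ is a feasible point of the supremum defining $\hat J^0$, and that at this point the residual $H\vec\D^\star - \vec\e$ vanishes by construction. Hence the supremand collapses to $|\vec\D^\star|^2$, which (after the $1/T$ normalization) is exactly $J^0(\theta,\tilde z)$. Since a supremum dominates the value at any feasible point, $J^0 \le \hat J^0$ follows at once, with no appeal to contraction.

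For finiteness, rewrite the supremand as the concave quadratic $-\vec\D'\mathcal H\vec\D + 2\vec\D'\vec\e$ with $\mathcal H = H + H' - I$; this is bounded above iff $\mathcal H>0$, in which case the maximizer is $\vec\D = \mathcal H^{-1}\vec\e$ and the value is $\vec\e'\mathcal H^{-1}\vec\e$, recovering \eqref{eq:Jhat_exp} and, by Schur complement, \eqref{eq:Jhat_sdp}. The substance is therefore to prove $\mathcal H>0$ for every $\theta\in\Theta$. Using the block lower-bidiagonal structure of $H$ (diagonal blocks $E_t$, subdiagonal blocks $-F_{t-1}$, with $\tilde\D_0=0$), I expand $\vec\D'\mathcal H\vec\D = 2\vec\D'H\vec\D - |\vec\D|^2 = \sum_t \tilde\D_t'(E_t+E_t')\tilde\D_t - 2\sum_t \tilde\D_t'F_{t-1}\tilde\D_{t-1} - \sum_t|\tilde\D_t|^2$. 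I then bound each cross term by a $P$-weighted completion of squares, $-2\tilde\D_t'F_{t-1}\tilde\D_{t-1} \ge -\tilde\D_t'P\tilde\D_t - \tilde\D_{t-1}'F_{t-1}'P^{-1}F_{t-1}\tilde\D_{t-1}$, which is precisely \eqref{eq:quadbound} with $b=\tilde\D_t$, $c=F_{t-1}\tilde\D_{t-1}$. Re-indexing the $F'P^{-1}F$ sum (the $t=0$ term vanishes, the $t=T$ term is added as slack) telescopes everything into $\sum_t \tilde\D_t'[(E_t+E_t') - P - F_t'P^{-1}F_t - I]\tilde\D_t$, and each bracket is nonnegative by the contraction constraint \eqref{eq:DT_relax} evaluated at $(\tilde x_t,\tilde u_t,\tilde y_t)$. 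This step is where $P>0$ and $\theta\in\Theta$ enter. I expect the main obstacle to be reconciling the unit margin produced by the $-I$ in $\mathcal H$ with the margin $H$ appearing in \eqref{eq:DT_relax}: strict positivity of $\mathcal H$, hence $\hat J^0<\infty$ rather than merely bounded, must be drawn from the strictness of that margin (taking $H\ge I$ in the $\ell_2$ case).

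Finally, suppose $\tilde z$ is generated with $w=0$, so that $\tilde x_{t+1} = a(\tilde x_t,\tilde u_t)$ and $\tilde y_t = g(\tilde x_t,\tilde u_t)$ hold exactly. Every $\theta\in\Theta$ satisfies the correctness condition \eqref{eq:DT_correctness} by construction, so $e(\tilde x_{t+1}) = e(a(\tilde x_t,\tilde u_t)) = f(\tilde x_t,\tilde u_t,g(\tilde x_t,\tilde u_t)) = f(\tilde x_t,\tilde u_t,\tilde y_t)$, whence every equation error $\e_t$ vanishes and $\vec\e(\theta,\tilde z)=0$. By \eqref{eq:Jhat_exp} this forces $\hat J^0(\theta,\tilde z)=0$ for every $\theta\in\Theta$, and then $0\le J^0 \le \hat J^0 = 0$ gives $J^0=0$ as well. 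Minimizing over $\theta\in\Theta$ yields $J^{0,\star}(\tilde z) = \hat J^{0,\star}(\tilde z) = 0$.
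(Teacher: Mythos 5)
The paper omits this proof (deferring to Theorem~6 of the cited Tobenkin et al.\ reference), and your argument is essentially the standard Lagrangian-relaxation proof used there: the true linearized deviation is feasible for the supremum with vanishing penalty, giving $J^0\le\hat J^0$; positive-definiteness of $\mathcal H$ follows by telescoping the quadratic form through the completion-of-squares bound \eqref{eq:quadbound} and invoking the contraction constraint \eqref{eq:DT_relax} blockwise; and correctness \eqref{eq:DT_correctness} kills every equation error when $w=0$, so both objectives vanish identically on $\Theta$. You have also correctly isolated the one point the paper glosses over: with the relaxation written as $|\vec\D|^2-2\vec\D'(H\vec\D-\vec\e)$, finiteness genuinely requires the contraction margin in \eqref{eq:DT_relax} to dominate the identity (or, equivalently, the multiplier to be rescaled by the margin), since an ``arbitrary'' $H>0$ such as $10^{-8}I$ does not by itself force $\mathcal H>0$; the theorem as stated implicitly assumes this normalization.
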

We omit the details of the proof because of space restrictions, since it is similar to \cite[Theorem 6]{tobenkin_convex_2017}.

\section{Examples and Discussion}
In this section we illustrate the method with some numerical simulations. The first example system we consider was previously studied in \cite{fan_observer_2003}:
\[
\dot x = a(x)=
\begin{bmatrix}
	x_2\\
    x_2-\frac{1}{3}x_2^3-x_2x_3^2\\
    x_2-x_3-\frac{1}{3}x_3^3-x_3x_2^2
\end{bmatrix}, \quad
y = g(x) = x_1
\]
The solution given in \cite{fan_observer_2003} relied on a certain ingenuity on the part of the designer, recognising the system can be decomposed into a linear part and a monotonic nonlinearity. In comparison, our method is quite ``plug and play'': we apply Theorem \ref{thm:CTmain} using CT1 as a contracting set and $f$ the set of degree-three polynomials in $\z, y$, the contraction constraint is imposed via sum-of-squares \cite{parrilo2003semidefinite}, and an observer is found in 1.41 seconds on 3.1 GHz Intel Core i7 with 16GB RAM. 

A bisection search over $\lambda$ reveals that $\lambda=1$ is the best rate of exponential convergence in this observer set, and with  $\lambda=1$ we computed the observer that minimized the $\ell^1$ norm of the coefficients of $f$, to encourage sparsity of coefficients. All computations were done using Yalmip \cite{YALMIP} and Mosek.

Fig. \ref{fig:ds_10} shows the evolution of the three states (solid lines) from initial conditions $x_0 = [-3,4,1]'$ and the observer states (dashed lines) initialized at $\hat x_0=[0,0,0]'$. Also shown are the states of the sampled-data observer (circles) proposed in Section \ref{sec:SD} with a sampling interval $h=0.1s$. It can be seen that the sampled-data observer is a good approximation to the CT observer, and converges rapidly to the true states.

In Fig.  \ref{fig:ds_30} we show corresponding results with a longer sampling interval of $h=0.3s$. It can now be seen that the behaviour of the CT and sampled-data observers are quite different in the initial transient, nevertheless both converge to the true states.

\begin{figure}\centering
  \includegraphics[width=0.85\columnwidth]{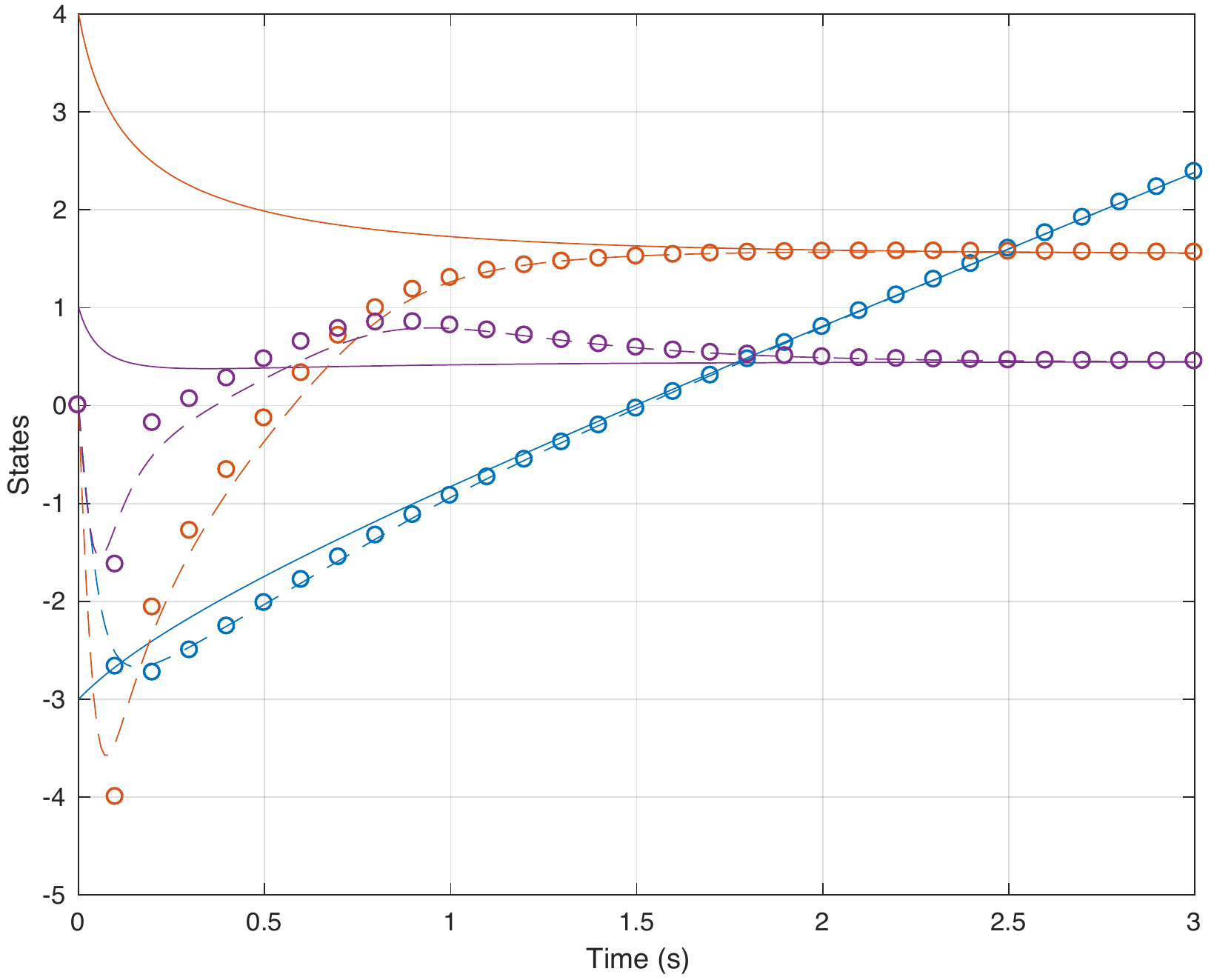}
  \caption{Comparison of true states (solid lines), CT observer (dashed lines), and sampled-data observer (circles) with sampling interval $h=0.1s$.}
  \label{fig:ds_10}
\end{figure}

We next tested the ability of the learning approach in Section \ref{sec:learning} to find an observer with good performance on noisy data. To generate training data we simulated the CT system over 5 seconds, sampled the state $\tilde x$ and output with sampling interval $h=0.1s$, added Gaussian white noise of variance 0.01 to the output to generate $\tilde y$, and then minimized $\hat J^0$ over the set CT1 with $H=10^{-8}I$.

We tested the resulting observer against the previously-designed observer (optimal with respect to $\lambda$) on 50 realisations of validation data, generated in the same way. A boxplot of the resulting mean-square errors is shown in Fig. \ref{fig:lr}. For this example, the learned observer has around half the median error and one third the worst-case error of the observer optimized for $\lambda$.

\begin{figure}\centering
  \includegraphics[width=0.85\columnwidth]{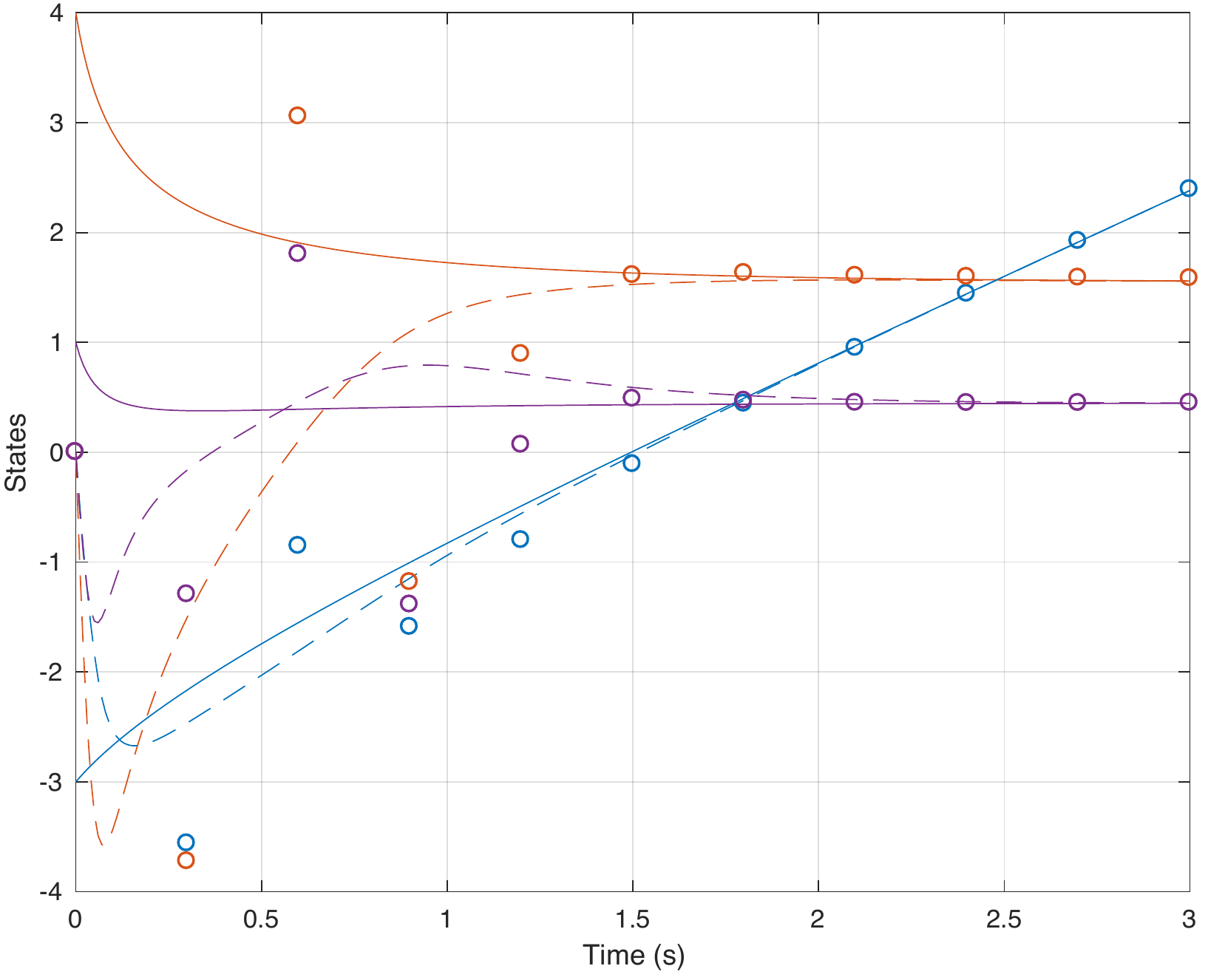}
  \caption{Comparison of true states (solid lines), CT observer (dashed lines), and sampled-data observer (circles) with sampling interval $h=0.3s$.}
  \label{fig:ds_30}
\end{figure}

\begin{figure}\centering
  \includegraphics[width=0.8\columnwidth]{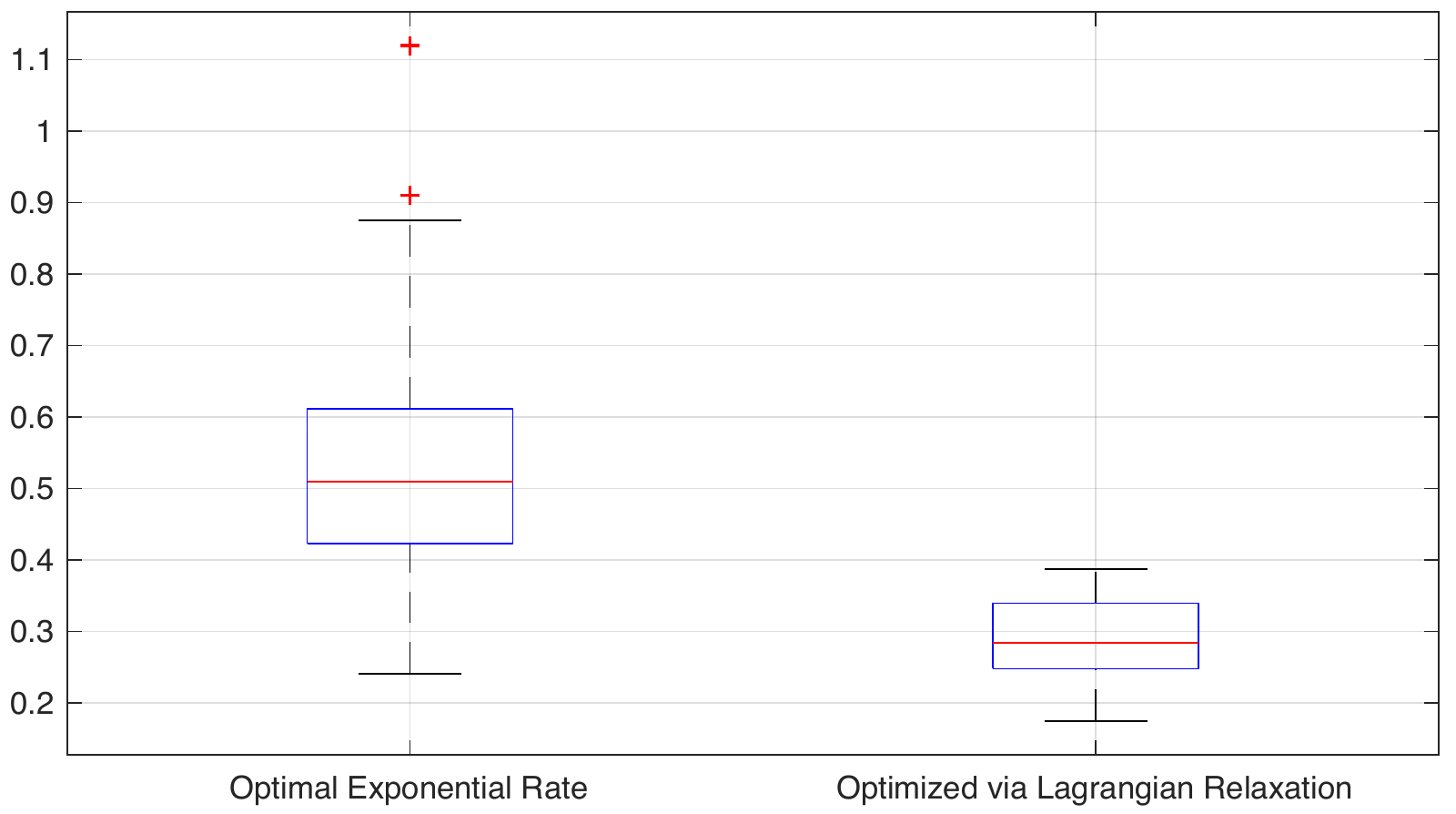}
  \caption{State estimation error on noisy validation data, comparing the observer which is optimal in terms of exponential rate to the one which is optimized for noise performance via Lagrangian relaxation.}
  \label{fig:lr}
\end{figure}

To benchmark the proposed learning method against a ``gold standard'', we now turn to observer design for linear discrete-time systems and compare to the Kalman filter. We compare performance on randomly sampled systems, using different amounts of training data, as follows:
\begin{enumerate}
  \item For each data length $T\in\{100, 250, 500, 1000, 2000\}$ do the following
  \item Sample 20 random 4th-order systems using Matlab {\tt drss} command, eliminating systems with pure integrators. For each system compute the steady-state Kalman filter and do the following:
  \item Generate a random training data set of length $T$ and train a DT observer using the method of Sec. \ref{sec:learning}.
  \item Sample 25 random validation data sets of length and compute the relative error for each, defined as a percentage:
  \[
J_{rel} = 100 \tfrac{\sum_{t=0}^T|\hat x^o_t-\tilde x_t|^2-\sum_{t=0}^T|\hat x^o_t-\tilde x_t|^2}{\sum_{t=0}^T|\hat x^k_t-\tilde x_t|^2}
\]
where $\hat x^o$ is the state estimate from our learned observer, and $\hat x^k$ is the state estimate from the Kalman filter.
\end{enumerate}

Boxplots of the computed $J_{rel}$ for all realisations are shown in Fig. \ref{fig:kalman}, and the median values of $J_{rel}$ for each training-data length are collected in Table \ref{tab:kalman}. It can be seen that with just a few hundred training samples, the performance of the learned estimator comes within around 2\% of the Kalman filter, and on some data sets outperforms the Kalman filter, indicated by negative quantities in Fig \ref{fig:kalman}.

\begin{figure}\centering
  \includegraphics[width=0.85\columnwidth]{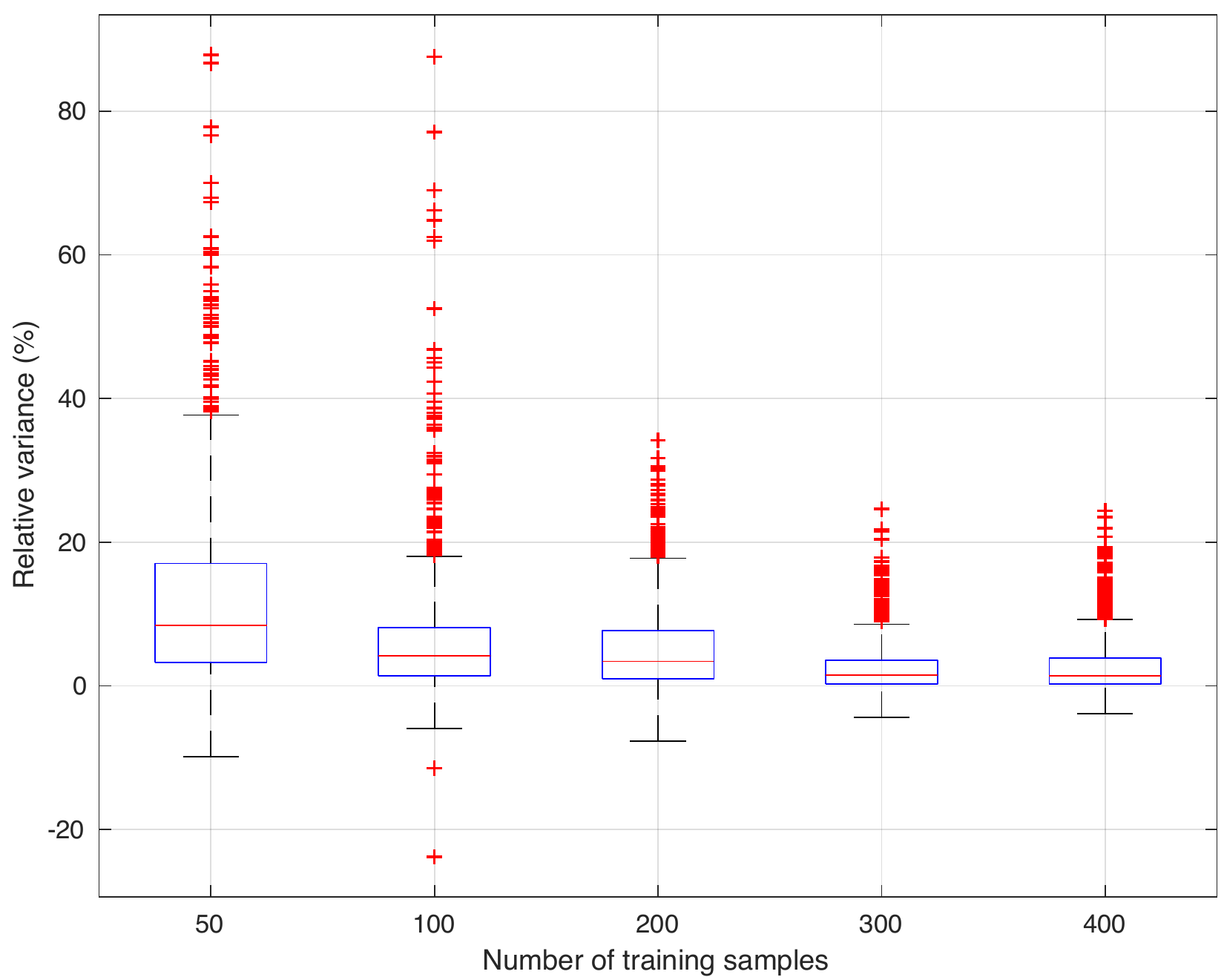}
  \caption{Comparison to the Kalman filter: boxplot of $J_{rel}$ vs number of samples used for training}
  \label{fig:kalman}
\end{figure}

\begin{table}
\caption{Medians of $J_{rel}$ for each training-data length}\label{tab:kalman}
\begin{tabular}{|l|c|c|c|c|c|}
\hline
Samples (N) & 50 & 100 &   200 &   300 &   400 \\
\hline 
Median $J_{rel}$ (\%) & 8.4336  &  4.2223 &   3.4214  &  1.4955  &  1.3731\\ \hline
\end{tabular}
\end{table}



\section{Conclusions}
In this paper, we have introduced methods for constructing convex sets of contracting CT, DT, and sampled-data observers for a given nonlinear system. We propose optimizing over these sets to based on sampled simulation data to ``learn'' and observer with good performance on noisy data.

While we have focused on the case that the observer state is $\hat x$, the proposed methods can be adapted to reduced-order and excessive-order observers. Also important is the case of ``reduced-complexity'' nonlinear observers, for which the {\em correctness} condition is infeasible, i.e. when the true system dynamics are not in the span of \eqref{eq:linparam}, one can relax  correctness to bounded error in \eqref{eq:correct} subsets of $\R^n\times\R^m$. Investigation of these approaches is underway.

\bibliographystyle{IEEEtran}
\bibliography{Observer.bib}

\end{document}